 \newtheorem{prop}{Proposition} 
\newtheorem{theorem}{Theorem}
\newtheorem{corollary}{Corollary}
\title{Linked Matrix Factorization}
\author{Michael J. O'Connell and Eric F. Lock  \\
\\
	   Division of Biostatistics, University of Minnesota\\  Minneapolis, MN 55455, U.S.A \\
	   }
\date{}
\begin{document}
\maketitle

\begin{abstract}

 In recent years, a number of methods have been developed for the dimension reduction and decomposition of multiple linked high-content data matrices.  Typically these methods assume that just one dimension, rows or columns, is shared among the data sources.  This shared dimension may represent common features that are measured for different sample sets (i.e., \emph{horizontal integration}) or a common set of samples with measurements for different feature sets (i.e., \emph{vertical integration}).  In this article we introduce an approach for simultaneous horizontal and vertical integration, termed Linked Matrix Factorization (LMF), for the more general situation where some matrices share rows (e.g., features) and some share columns (e.g., samples).  Our motivating application is a cytotoxicity study with accompanying genomic and molecular chemical attribute data. In this data set, the toxicity matrix (cell lines $\times$ chemicals) shares its sample set  with a genotype matrix (cell lines $\times$ SNPs), and shares its feature set with a chemical molecular attribute matrix (chemicals $\times$ attributes). LMF gives a unified low-rank factorization of these three matrices, which allows for the decomposition of systematic variation that is shared among the three matrices and systematic variation that is specific to each matrix.  This may be used for efficient dimension reduction, exploratory visualization, and the imputation of missing data even when entire rows or columns are missing from a constituent data matrix.  We present theoretical results concerning the uniqueness, identifiability, and minimal parametrization of LMF, and evaluate it with extensive simulation studies.

\end{abstract}
%\keywords{}

\section{Introduction}
\label{intro}
Recent technological advances in biomedical research have led to a growing number of platforms for collecting large amounts of health data.  Molecular profiling modalities such as genetic sequencing and gene expression microarrays, and imaging modalities such as MRI scans, yield high-dimensional data with complex structure.    Methods that simplify such data by identifying latent patterns that explain most of the variability are very useful for exploratory visualization of systematic variation, dimension reduction, missing data imputation, and other tasks.  For a single data matrix $X: m \times n$,  this simplification can be accomplished via a principal components analysis (PCA) or via other approaches to low-rank matrix factorization (for an overview see \citet{wall2003singular}).  For example, $X$ may represent a microarray with expression measurements for $m$ genes for $n$ biological samples.  It is also increasingly common to have multiple linked high-dimensional data matrices for a single study, e.g.,  \begin{align}X_1: m_1 \times n, X_2: m_2 \times n, \hdots, X_k: m_k \times n \label{eqVertInt} \end{align}
with $n$ shared columns.  In the \emph{multi-source} context (\ref{eqVertInt}) $X_1$ may represent expression for $m_1$ genes, $X_2$ may represent the abundance of $m_2$ proteins, and $X_3$ may represent the abundance of $m_3$ metabolites, for a common set of $n$ biological samples.  For multi-source data a straightforward ad-hoc approach is to perform a separate PCA of each matrix $X_i$ (for example, see \citet{zhao2014combining}).  However,  patterns of systematic variability may be shared between blocks; for example, it is reasonable to expect that some sample patterns that are present in gene expression data are also present in proteins.  Thus, separate factorizations can be inefficient and underpowered to accurately recover these joint signals, and they also provide no insight into the connections between data matrices that are often of scientific interest.    An alternative approach is to perform a single joint PCA analysis of the concatenated data $X: (m_1+m_2+\cdots+m_k) \times n$, and this approach as been referred to as consensus PCA \citep{Wold,Westerhuis}.   However, a consensus PCA approach assumes that all systematic patterns are shared across data matrices, and lacks power to accurately recover signals that may exist in only one data matrix.  
 
The recent ubiquity of high-dimensional multi-source data has motivated more flexible methods for scenario (\ref{eqVertInt}).  A guiding principle for several such methods is to simultaneously model features that are shared across multiple sources (i.e., \emph{joint} and features that are specific to a particular source (i.e., \emph{individual}).  Methods that follow this strategy have been developed that extend well-established exploratory techniques such as partial least squares \citep{Lofstedt2011}, canonical correlation analysis \citep{zhou2015}, non-parametric Bayesian modeling \citep{ray2014}, non-negative factorization \citep{yang2016}, and simultaneous components analysis \citep{schouteden2014}.  The Joint and Individual Variation Explained (JIVE) method \citep{lock2013joint,o2016r} is a direct extension of PCA, distinguishing components that explain covariation (joint structure) between the data sources and principal components that explain variation that is individual to each data source.  This distinction simplifies interpretation, and also improves accuracy to recover underlying signals because structured individual variation can interfere with finding important joint signal, just as joint structure can obscure important signal that is individual to a data source.
  
Multi-source data integration (\ref{eqVertInt}) has been termed \emph{vertical integration} \citep{tseng2015integrating}. Related dimension reduction and pattern recognition methods have also been developed specifically for the horizontal integration of a single data source  measured for multiple sample groups \citep{kim2017metaPCA,huo2016meta}: \begin{align} X_1: m \times n_1, X_2: m \times n_2, \hdots, X_k: m \times n_k. \label{eqHorizInt} \end{align}   
 Other integrative approaches have been developed for a collection of matrices that share both dimensions: \begin{align} X_1: m \times n, X_2: m \times n, \hdots, X_k: m \times n. \label{eqPVD} \end{align}  Population value decomposition (PVD) \citep{crainiceanu2011population} was designed for the analysis of aligned image populations and produces a joint low-rank factorization for scenario (\ref{eqPVD}) with shared row and column components; similar techniques have also been developed in the computer science literature \citep{ding20052,ye2005generalized}. Another approach for scenario(\ref{eqPVD}) is to treat the data as a single multi-way array (i.e., tensor) $\mathbb{X}: m \times n \times k$ and apply well-established tensor factorizations such as the CANDECOMP/PARAFAC \citep{harshman1970foundations} and Tucker \citep{tucker1966some} factorization which extend PCA and related matrix dimension reduction methods to higher-order arrays.  Neither PVD nor a tensor factorization approach distinguishes joint and individual structure among the constituent data matrices. The Linked Tucker2 Decomposition \citep{yokota2014linked} and Bayesian Multi-view tensor factorization \citep{khan2014bayesian} methods do allow for the decomposition of joint and individual structure, under an extended scenario for (\ref{eqPVD}) where the collection of $m \times n$ matrices can be grouped into sets of size $d_i$: 
 \begin{align} \mathbb{X}_1: m \times n \times d_1, \mathbb{X}_2: m \times n \times d_2, \hdots, \mathbb{X}_k: m \times n \times d_k.\end{align}
 \citet{acar2011all} describe a method for the joint factorization of a matrix and a tensor, but their approach does not allow for the decomposition of joint and individual structure.  

In this article we address the simultaneous low-rank factorization and decomposition of joint and individual structure for the novel context of linked data in which the shared dimensions (e.g., rows or columns) are not consistent across the constituent data matrices.  Our motivating example is a large-scale cytotoxicity study \citep{abdo2015population} that consists of three interlinked high-content data matrices:   
% This data set consists of three matrices (Figure \ref{fig:ToxScr}):
\begin{enumerate}
\item $X: m_1 \times n_1$: A cytotoxicity matrix with a measure of cell death for $n_1$ chemicals across a panel of $m_1$ genetically distinct cell lines,
\item $Y: m_2 \times n_1$:  A chemical attribute matrix with $m_2$ molecular attributes measured for the $n_1$ chemicals, and 
\item $Z: m_1 \times n_2$: A genomic matrix with $n_2$ single nucleotide polymorphisms (SNP) measured for each of the $m_1$ cell lines.
\end{enumerate}
Note that $X$ shares its row set with $Z$ and its column set with $Y$, as illustrated in Figure~\ref{fig:xyz}.  These data were made public as part of a DREAM challenge for open science \citep{eduati2015prediction}.  We are particularly interested in investigating the interaction between chemical toxicity, genomics, and measurable chemical attributes, i.e.,  what systematic variability in $X$ is shared by $Y$ and $Z$?  

We introduce a method called Linked Matrix Factorization (LMF), that gives a unified and parsimonious low-rank factorization of these three data matrices.  We also extend the framework of the JIVE method to allow for the decomposition of joint and individual structure in this context with LMF-JIVE. This extension requires new approaches to estimation and new theoretical results concerning the uniqueness, identifiability, and minimal parametrization of the decomposition.  We illustrate how the results can facilitate the visual exploration of joint and individual systematic variation. We also describe how the factorization can be used in conjunction with an Expectation-Maximization (EM) algorithm \citep{dempster1977maximum} for the principled imputation of missing values and complete analysis of multi-source data, even when entire rows or columns are missing from the constituent data matrices.  

In what follows we first describe a novel joint low-rank factorization of these data in Section~\ref{sec2}, before describing the extension to joint and individual structure in Section~\ref{sec3}.  In Section~\ref{sec4} we discuss missing data imputation, and in Section~\ref{sec5} we discuss different approaches to select the joint and individual ranks (i.e., number of components) in the factorization.  Sections~\ref{sec2},~\ref{sec3}, ~\ref{sec4}, and \ref{sec5} each include simulation studies, to assess each component of the proposed methodology.  In Section 6 we describe the results of the cytotoxicity application, and in Section 7 we give some concluding remarks.

\begin{figure}[h]
   \centering
   \includegraphics[scale=.5]{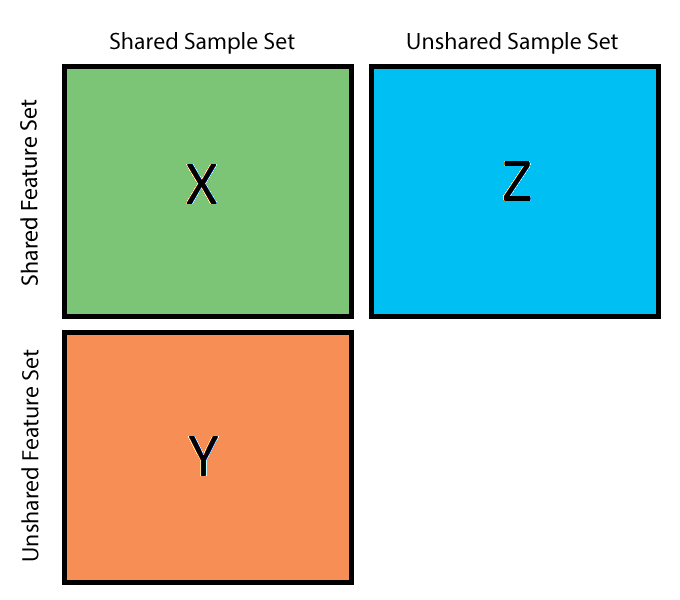}
   \caption{The structure of data for which the LMF algorithm was designed to analyze.  The X and Y matrices share a sample set and have a common row space.  Similarly, the X and Z matrices share a feature set and have a common column space.}
   \label{fig:xyz}
\end{figure}

\section{Joint Factorization}
\label{sec2}
\subsection{Model} \label{mod1}

We will refer to the three matrices involved in the LMF as $X$, $Y$, and $Z$, where $X$ shares its row space with $Y$ and its column space with $Z$ (Figure \ref{fig:xyz}).  Let the dimensions of X be $m_1\times n_1$, the dimensions of Y be $m_2\times n_1$, and the dimensions of Z be $m_1 \times n_2$.  Our task is to leverage shared structure across $X$, $Y$, and $Z$ in a simultaneous low-rank factorization. We define a joint rank $r$ approximation for the three data matrices as follows: 
\begin{align*}
X &= US_xV^T + E_x \\ 
Y &= U_yS_yV^T + E_y \\
Z &= US_zV_z^T + E_z 
\end{align*}
where 
\begin{itemize}
\item $U$ is an $m_1 \times r$ matrix representing the row structure shared between $X$ and $Z$
\item $V$ is an $n_1 \times r$ matrix representing the column structure shared between $X$ and $Y$
\item $U_y$ is an $m_2 \times r$ matrix representing how the shared column structure is scaled and weighted over $Y$
\item $V_z$ is an $n_2 \times r$ matrix representing how the shared row structure is scaled and weighted over $Z$
\item $S_x$, $S_y$, and $S_z$ are $r\times r$ scaling matrices for $X$, $Y$, and $Z$, respectively, and
\item $E_x$, $E_y$ and $E_z$ are error matrices in which the entries are independent and have mean $0$.
\end{itemize}

For the remainder of this article we will subsume the scaling matrices $ S_y $ and $ S_z $ into the $ U_y $ and $ V_z $ matrices, respectively, for a more efficient parameterization: $Y \approx U_y V^T$ and $Z \approx U V_Z^T$.  Then, for identifiability of the components it suffices to assume that the columns of $U$ and $V$ are orthonormal and $S_x$ is diagonal (see Section ~\ref{diagS}).  For notational convenience we also denote the following  concatenated matrices that span shared dimensions : $ \tilde{U} = [US_x \,\,\, U_y], \tilde{V} = [VS_x \,\,\, V_z], \tilde{Y} = [X^T \,\,\, Y^T]^T, \tilde{Z} = [X \,\,\, Z]. $

\subsection{Estimation} \label{estj}

To estimate the joint structure, we iteratively minimize the sum of squared residuals 
\[\mbox{SSE} = ||E_x||_F^2+||E_y||_F^2 +||E_z||_F^2,\]
using an alternating least squares algorithm, where $||\cdot||_F$ defines the Frobenius norm.  Given initial values, the algorithm proceeds by iteratively updating the components $U$, $V$, $U_y$, $V_y$, and $S_x$. In practice we first center and scale the three data sets, which prevents any of the matrices from having a disproportionately large influence on the joint components.  Next, we initialize $\tilde{V}$ as the first $r$ right singular vectors of the singular value decomposition (SVD) of $\tilde{Z}$.  We initialize $S_x$ as the identity matrix, so that $V$ is the first $n_1$ columns of $\tilde{V}$.  We then repeatedly cycle through the following local least-squares minimization steps with other components held fixed: 
\begin{enumerate}
\item Update $U_y$ via ordinary least squares: $ U_y = (V^T V)^{-1} V^TY $
\item Update $U$ via ordinary least squares:  $ U = (\tilde{V}^T \tilde{V})^{-1} \tilde{V}^T \tilde{Z} $
\item Scale $U$ by dividing each column by its Frobenius norm
\item Update $\tilde{U}$: $ \tilde{U} = [US_x \,\,\, U_y] $
\item Update $V$ via ordinary least squares:  $ V = (\tilde{U}^T \tilde{U})^{-1} \tilde{U}^T \tilde{Y} $
\item Update $V_z$ via ordinary least squares:  $ V_z = (U^T U)^{-1} U^T Z $
\item Scale V by dividing each column by its Frobenius norm
\item Update $S_x$ via least squares; define $W: np \times r$ such that the $i$'th column of $W$ is the vectorization of the product of the $i$th columns of $U$ and $V$, $W[,i] = \mbox{vec}(U[,i] V[,i]^T)$, then the diagonal entries of $S_x$ are $(W^T W)^{-1}W^T \mbox{vec}(X)$
\item Update $\tilde{U}$ and $\tilde{V}$ to incorporate the new $S_x$.
\end{enumerate}
The algorithm will improve the SSE at each step until convergence, resulting in the following rank $r$ estimates for the joint structure:
\begin{align}
\label{facEq}
\begin{split}
 J_x &= US_xV^T \\
 J_y &= U_yV^T  \\
 J_z &= UV_z^T.
 \end{split}
\end{align}

\subsection{Diagonalizing S$_x$} \label{diagS}

 Importantly, by constraining $S_x$ to be a diagonal matrix we do not limit our solution space, as shown in Proposition~\ref{thm:DiagSx}. This simplifies the parameterization, facilities the identifiability of the components, and improves computation time by only estimating the diagonal elements of $S_x$. This property does not extend to other scenarios where constituent data matrices share both rows and columns (\ref{eqPVD}), such as the PVD factorization \citep{crainiceanu2011population}.  
     
\begin{prop}
\label{thm:DiagSx}
Assume we have a decomposition in the form of Equation~(\ref{facEq}).  Then the decomposition can be rewritten such that $S_x$ is a diagonal matrix.
\end{prop}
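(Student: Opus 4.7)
The plan is to exploit the singular value decomposition of $S_x$ to absorb its orthogonal factors into the remaining components without altering any of the three fitted matrices in (\ref{facEq}). Write $S_x = P D Q^T$ with $P, Q$ orthogonal and $D$ diagonal, and define
\begin{align*}
U' = UP, \quad V' = VQ, \quad S_x' = D, \quad U_y' = U_y Q, \quad V_z' = V_z P.
\end{align*}

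Then the three pieces of the fit become $U' S_x' (V')^T = U P D Q^T V^T = U S_x V^T$, $U_y' (V')^T = U_y Q Q^T V^T = U_y V^T$, and $U' (V_z')^T = U P P^T V_z^T = U V_z^T$, using only that $P P^T = Q Q^T = I_r$. Since each of the three fitted matrices is unchanged, the residuals $E_x$, $E_y$, $E_z$ are unchanged as well, and the reparametrized tuple satisfies (\ref{facEq}) with a diagonal scaling matrix $S_x' = D$, which is the claim.

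If the original $U$ and $V$ had orthonormal columns, so do $U' = UP$ and $V' = VQ$, since right-multiplication by an orthogonal matrix preserves orthonormality of columns. Thus the reparametrization is also compatible with the identifiability conventions discussed in Section~\ref{mod1}.

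There is no genuine obstacle; the argument is a direct use of the SVD. The conceptual point I would highlight is that $X$, $Y$, and $Z$ constrain only the specific products $US_xV^T$, $U_yV^T$, and $UV_z^T$, so an orthogonal rotation applied on the right of $U$ (resp.\ $V$) can be canceled by the corresponding rotation of $V_z$ (resp.\ $U_y$). These two independent rotational freedoms are exactly what is needed to diagonalize $S_x$ via its SVD. The authors' remark that this trick fails for scenarios like (\ref{eqPVD}) is precisely because there a single factor matrix plays multiple simultaneous roles across the constituent decompositions, so an orthogonal rotation cannot be independently absorbed in every place it appears.
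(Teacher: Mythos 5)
Your proof is correct and follows essentially the same route as the paper's: take the SVD $S_x = PDQ^T$ and absorb the orthogonal factors into $U$, $V$, $U_y$, and $V_z$. In fact your assignment $U_y' = U_yQ$, $V_z' = V_zP$ is the one that actually makes the products $U_yV^T$ and $UV_z^T$ invariant (the paper writes $U_y^*P$ and $V_z^*Q$, which appears to be a transposition of $P$ and $Q$), and your explicit verification of all three products is a welcome addition to the paper's terser argument.
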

  
\begin{proof}
Let $S^*_x$ be a matrix of rank r.  Suppose a set of matrices has the decomposition $ U^*S^*_xV^{*T} $, $ U_y^*V^{*T} $, and $ U^*V_z^{*T} $.  Then we can take a rank $r$ SVD of $ S^*_x $.  For this SVD, let P = the left singular vectors, Q = the right singular vectors, and $S_x$ = the singular values.  If we let $ U = U^*P $ and $ V = V^*Q $, then, we can rewrite this matrix decomposition as $ US_xV^T $, where $S_x$ is a diagonal matrix because it represents the singular values of $S^*_x$.  Then we can set $U_y = U_y^*P$ and $V_z = V_z^*Q$.   This gives us a decomposition in the form of Equation~\ref{facEq} in which $S_x$ is diagonal.  
\end{proof}

\subsection{Simulation Study} \label{simst1}

We ran a simulation to test the ability of the LMF algorithm to recover the true underlying joint structure of the data.  To generate the data, we simulated from the model, generating random matrices for the joint components.  There were 100 simulated data sets.  All elements of the joint structure components, $U$, $S_x$, $V$, $U_y$, and $V_z$, were simulated from a Normal$(0,1)$ distribution.  Error matrices were simulated from a Normal$(0,1)$ distribution.  The simulated data sets had 50 rows and 50 columns, and the rank of the joint structure was 2.  The LMF algorithm was run with a convergence threshold of .00001 and a maximum 5000 iterations.

We evaluated the performance of the LMF algorithm in these simulations by two criteria.  First, we calculated the relative reconstruction error, which measures the algorithm's ability to retrieve the true joint structure of the data. We scale the reconstruction residuals by the total structure to get the relative reconstruction error:   
$$ E_{rec} = \frac{\| J_x - J_x^{true} \| ^2_F + \| J_y - J_y^{true} \| ^2_F + \| J_z - J_z^{true} \| ^2_F}{\| J_x^{true} \| ^2_F + \| J_y^{true} \| ^2_F + \| J_z^{true} \| ^2_F} $$
where
$$ J_x^{true} = US_xV^T, J_y^{true} = U_yV^T, \text{ and } J_z^{true} = UV_z^T. $$
For the second criterion we consider the relative residual error, which measures the amount of variability in the data that is captured by the estimated joint structure:
$$ E_{res} = \frac{\| J_x - X \| ^2_F + \| J_y - Y \| ^2_F + \| J_z - Z \| ^2_F}{\| X \| ^2_F + \| Y \| ^2_F + \| Z \| ^2_F}. $$

Over the $100$ simulations, the mean reconstruction error was 0.122 with a standard deviation of 0.042.   The mean residual error was 0.588 with a standard deviation of 0.074.   This relatively small value for the mean reconstruction error in contrast to the mean residual error suggests that the LMF algorithm does a good job of recovering the underlying joint structure even in the presence of random noise.

We also ran a simulation with varying error variance to see how differences in error variance affect the accuracy of the decomposition.  This simulation was similar to the above simulation, but instead of the error matrices being drawn from a normal(0,1) distribution, they were drawn from a normal(0, $\sigma^2$) distribution, where $\sigma^2$ took on a value of $\frac{i}{100}$ for the $i^{th}$ simulated data set, varying between 0.01 and 1.  The relationship between error variance ($\sigma^2$) and reconstruction error is shown in Figure \ref{fig:evarSim}a, and the relationship between error variance and residual error is shown in Figure \ref{fig:evarSim}b.  These results demonstrate that in the presence of minimal noise the data are exactly captured with the converged low rank approximation; with increasing noise both the reconstruction error and residual error increase linearly, but the reconstruction error remains small. 

\begin{figure}[h]
   \centering
   \begin{subfigure}[b]{0.49\textwidth}
   \includegraphics[scale=0.4]{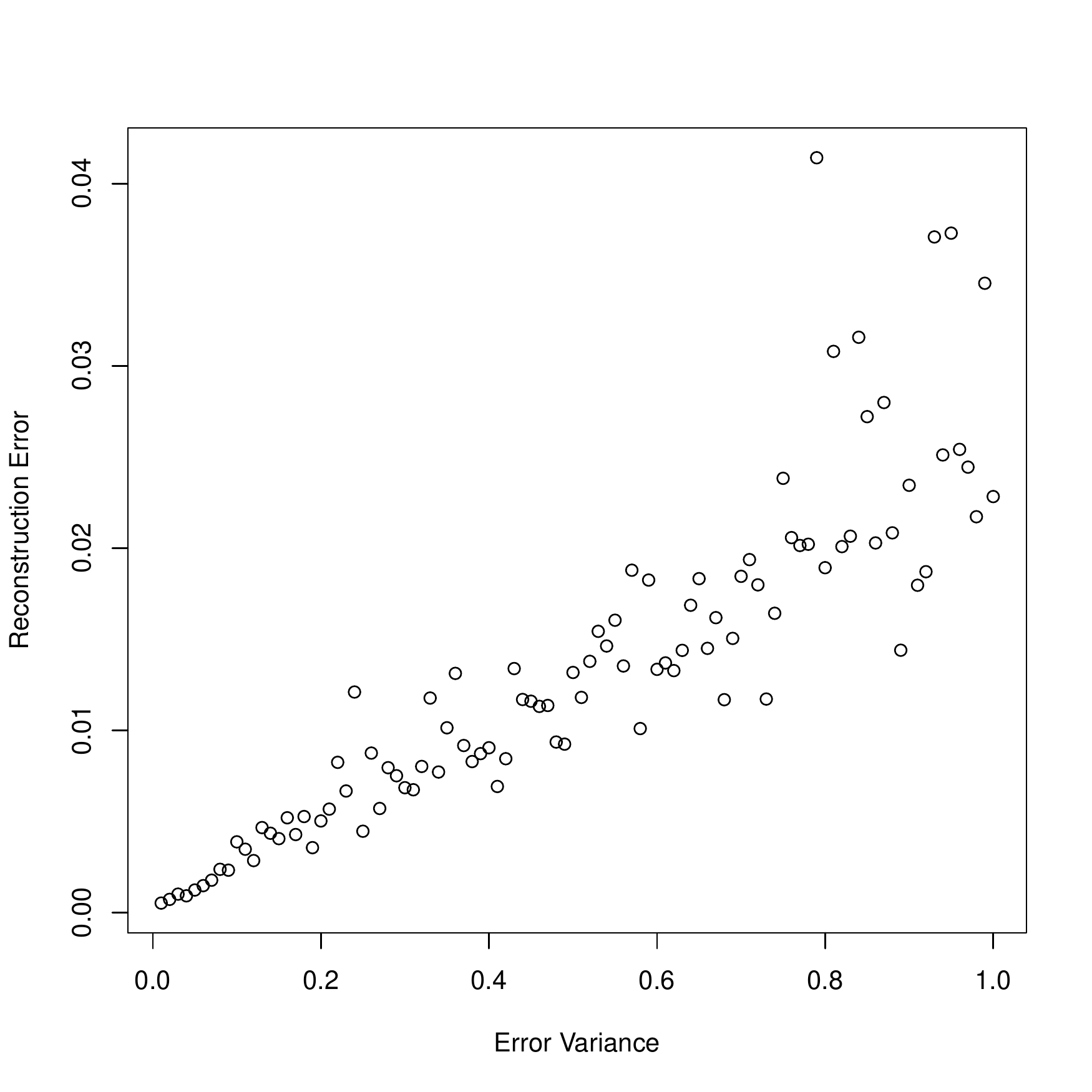}
   \caption{}
   \end{subfigure}
   \begin{subfigure}[b]{0.49\textwidth}
   \includegraphics[scale=0.4]{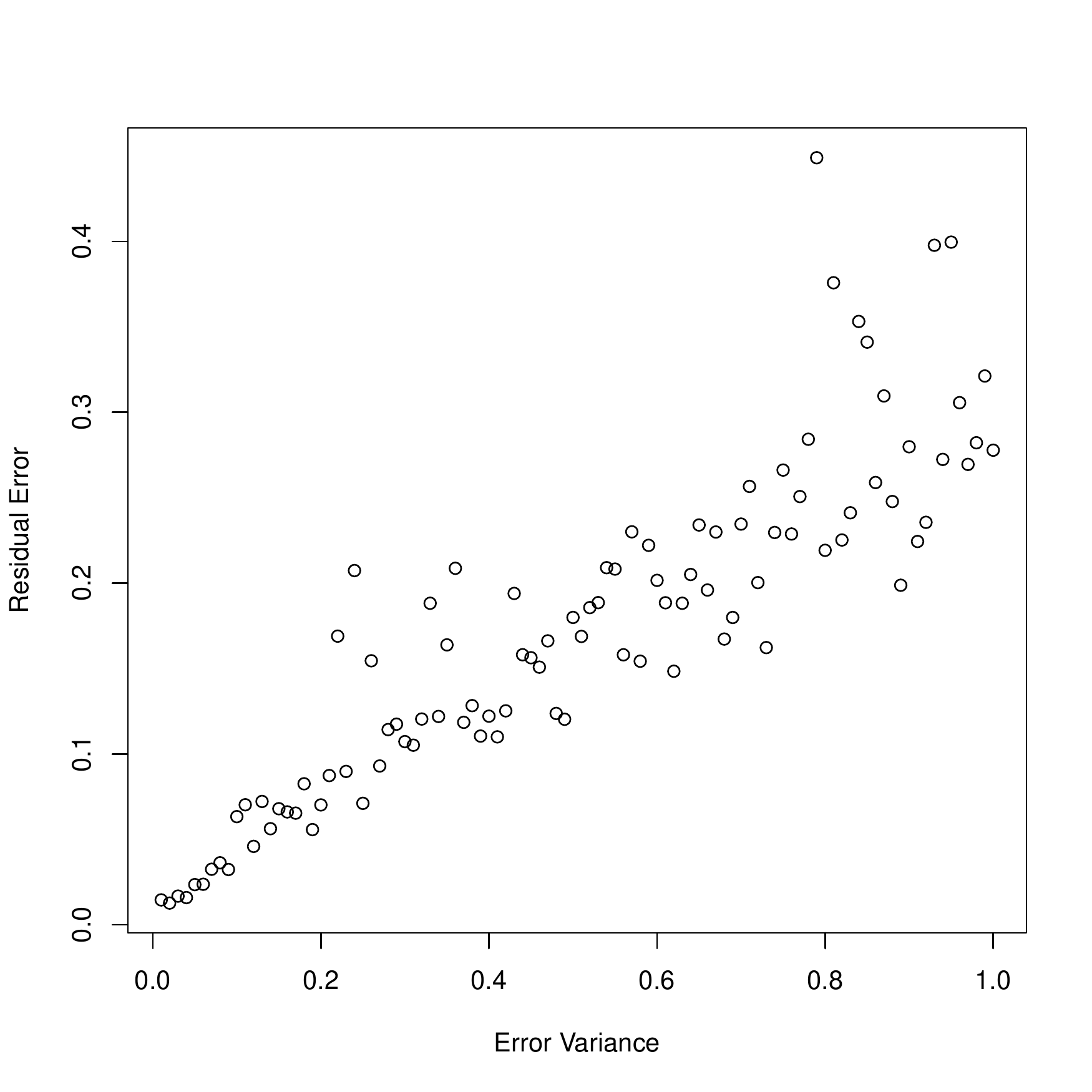}
   \caption{}
   \end{subfigure}
   \caption{The relationship between the error variance of a data set and the reconstruction error (a) or residual error (b) in the LMF decomposition.}
   \label{fig:evarSim}
\end{figure}

\section{Joint and Individual Decomposition}
\label{sec3}
\subsection{Model}

We extend the joint factorization approach of Section~\ref{sec2} to also allow for structured low rank variation that is individual to each data matrix, as follows:
$$ X = US_xV^T + U_{ix}S_{ix}V_{ix}^T + E_x $$
$$ Y = U_yS_yV^T + U_{iy}S_{iy}V_{iy}^T + E_y $$
$$ Z = US_zV_z^T + U_{iz}S_{iz}V_{iz}^T + E_z $$
where the individual components are given by\begin{itemize}
\item $U_{ij}$ for $j=\{x,y,z\}$, matrices representing the row structure unique to each matrix
\item $S_{ij}$ for $j=\{x,y,z\}$, matrices representing the scaling for the individual structure of each matrix, and 
\item $V_{ij}$ for $j=\{x,y,z)\}$, matrices representing the column structure unique to each matrix.
\end{itemize}   We define this joint and individual factorization as \emph{LMF-JIVE}.  In practice, we do not estimate the scaling matrices $S_{ix}$, $S_{iy}$, and $S_{iz}$, and instead allow them to be subsumed into the row and column structure matrices.  

\subsection{Estimation}

To estimate the joint and individual structure, we extend the alternating least squares algorithm from Section \ref{estj}.  Define the low rank approximations for individual structure 
\begin{align}
\label{facEq2}
\begin{split}
 A_x &= U_{ix}S_{ix}V_{ix}^T \\
 A_y &= U_{iy}S_{iy}V_{iy}^T \\
 A_z &= U_{iz}S_{iz}V_{iz}^T.
 \end{split}
\end{align}
  Let $r$ be the rank of joint structure as defined in Section~\ref{sec2}, and let $r_x$, $r_y$, and $r_z$ be the ranks of the individual structure  $A_x$, $A_y$, and $A_z$, respectively.  The estimation algorithm proceeds by iteratively updating the components of $\{J_x,J_y,J_z\}$ and $\{A_x,A_y,A_z\}$ to minimize the total sum of squared residuals until convergence. Thus, to estimate joint structure we define the partial residuals $X^{J}=X - A_x$, $Y^{J}=Y - A_y$, and $Z^{J}= Z - A_z$.  We similarly define $X^{I}=X - J_x$, $Y^{I}=Y - J_y$, and $Z^{I}= Z - J_z$.   In practice we    center and scale the three data sets, as for the joint LMF model. We also initialize $\tilde{V}$, $S_x$, and $U_y$ as in the joint LMF model (see Section \ref{estj}).   For LMF-JIVE, we must also initialize $A_x$, $A_y$, and $A_z$ to matrices of zeros.  We then repeat the following steps until convergence (when the total sum of squares for the joint and individual estimates between the current iteration and the previous iteration is less than a chosen threshold) or until we reach the maximum number of iterations:
\begin{enumerate}
\item Set $X^{J}=X - A_x$, $Y^{J}=Y - A_y$, and $Z^{J}= Z - A_z$.
\item Define concatenations $\tilde{Y} = [(X^{J})^T \,\,\, (Y^{J})^T]^T$ and $\tilde{Z} = [X^{J} \,\,\, Z^{J}]$.
\item Update $U$ via ordinary least squares: $ U = (\tilde{V}^T \tilde{V})^{-1} \tilde{V}^T \tilde{Z} $
\item Scale $U$ by dividing each column by its Frobenius norm.
\item Update $ \tilde{U} = [US_x \,\,\, U_y] $
\item Update $V$ via ordinary least squares $ V = (\tilde{U}^T \tilde{U})^{-1} \tilde{U}^T \tilde{Y} $
\item Update $V_z$ via ordinary least squares: $ V_z = (U^T U)^{-1} U^T Z^{J} $
\item Scale $V$ by dividing each column by its Frobenius norm.
\item Update $U_y$ via ordinary least squares: $ U_y = (V^T V)^{-1} V^T Y^{J} $
\item Update $S_x$ via least squares; define $W: np \times r$ such that the $i$'th column of $W$ is the vectorization of the product of the $i$th columns of $U$ and $V$, $W[,i] = \mbox{vec}(U[,i] V[,i]^T)$, then the diagonal entries of $S_x$ are $(W^T W)^{-1}W^T \mbox{vec}(X)$
\item Recalculate $\tilde{U}$ and $\tilde{V}$ with the newly updated $S_x$.
\item Set $X^{I}=X - A_x$, $Y^{I}=Y - A_y$, and $Z^{I}= Z - A_z$.
\item Update $A_x$ via a rank $r_x$ SVD of $X^{I}$, wherein  $U_{ix}$ gives the right singular vectors, $V_{ix}$ gives the left singular vectors, and $S_{ix}$ gives the singular values.
\item Update $A_y$ via a rank $r_y$ SVD of $Y^{I}$, and update  $A_z$ via a rank $r_z$ SVD of $Z^{I}$
\end{enumerate}

After convergence, we suggest applying the following transformation to assure that the joint and individual structures are orthogonal for $Y$ and $Z$: 
\begin{align}
\begin{split}
\label{trEq}
 % J_x^{\perp} &= J_x + P_J^C A_x P_J^R , \,\,
 % A_x^{\perp} = A_x - P_J^C A_x P_J^R \\
 J_y^{\perp} &= J_y + A_y P_J^R , \,\,
 A_y^{\perp} = A_y - A_y P_J^R \\
 J_z^{\perp} &= J_z + P_J^C A_z , \,\,
 A_z^{\perp} = A_z - P_J^C A_z.
\end{split}
\end{align}
Here, $ P_J^C = UU^T $ is the projection onto the column space of $J_x$, and $ P_J^R = VV^T $ is the projection onto the row space of $J_x$. This transformation makes the joint and individual structures identifiable; this and other properties  are investigated in Sections~\ref{theory} and~\ref{JIVEsim}. The main scientific rationale for the orthogonalizing transformation is that any structure in the individual matrices that is in the column or row space of $J_x$ should reasonably be considered joint structure.  

In the preceding algorithm we begin with the estimation of joint structure. We could alternatively estimate the individual structure first, by  initializing $J_x$, $J_y$, and $J_z$ to matrices of zeros and having steps 12 and 13 precede step 1 in the algorithm.  We also consider this alternative approach in Section~\ref{simji}.

\subsection{Theoretical Results} \label{theory}

\subsubsection{Uniqueness}
\label{unique}

Below we show that the LMF-JIVE decomposition for the structure of matrix $X$ is unique and identifiable. The proof rests on the defining assumption that shared row and column spaces are the same for joint structure, but different for individual structure.  

\begin{theorem}
\label{thm1}
Assume $ X = J_x + A_x $, $ Y = J_y + A_y $, and $ Z = J_z + A_z $ where
$$ row(J_x) = row(J_y) \; , \; row(A_x)\cap row(A_y)=\{\mathbf{0}\} \; , \; row(J_x)\cap row(A_x)=\{\mathbf{0}\} $$ and
$$ col(J_x) = col(J_z) \; , \; col(A_x)\cap col(A_z)=\{\mathbf{0}\} \; , \; col(J_x)\cap col(A_x)=\{\mathbf{0}\}.$$  If also $ X = J^*_x + A^*_x $, $ Y = J^*_y + A^*_y $, and $ Z = J^*_z + A^*_z $ where
$$ row(J^*_x) = row(J^*_y) \;,\; row(A^*_x)\cap row(A^*_y)=\{\mathbf{0}\} \;,\; row(J^*_x)\cap row(A^*_x)=\{\mathbf{0}\}$$ and
$$ col(J^*_x) = col(J^*_z) \;,\; col(A^*_x)\cap col(A^*_z)=\{\mathbf{0}\} \; , \; col(J^*_x)\cap col(A^*_x)=\{\mathbf{0}\}, $$ then $J_x=J^*_x$ and $A_x=A^*_x$.  
\end{theorem}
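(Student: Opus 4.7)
The plan is to prove $D := J_x - J_x^* = A_x^* - A_x = 0$; either form of this identity then delivers $J_x = J_x^*$ and $A_x = A_x^*$ simultaneously. I will argue that the column space of $D$ must be $\{\mathbf{0}\}$ using the $Z$-equation, and analogously (or as a check) that the row space is $\{\mathbf{0}\}$ using the $Y$-equation; either vanishing already gives $D = 0$.

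From the two expressions for $D$, the column space satisfies
\[
col(D) \subseteq [col(J_x) + col(J_x^*)] \cap [col(A_x) + col(A_x^*)].
\]
Using the shared-column hypothesis $col(J_x) = col(J_z)$ and its starred version, the first bracketed set equals $col(J_z) + col(J_z^*)$. The task thus reduces to showing that any $v$ lying simultaneously in $col(J_z) + col(J_z^*)$ and in $col(A_x) + col(A_x^*)$ must be zero. I would take $v = J_z\alpha + J_z^*\alpha^* = A_x\beta + A_x^*\beta^*$ for suitable coordinate vectors, then invoke $Z = J_z + A_z = J_z^* + A_z^*$ to rewrite $J_z\alpha$ and $J_z^*\alpha^*$ as columns of $Z$ minus contributions in $col(A_z)$ and $col(A_z^*)$. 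The disjointness hypotheses $col(A_x) \cap col(A_z) = \{\mathbf{0}\}$, $col(A_x^*) \cap col(A_z^*) = \{\mathbf{0}\}$, and $col(J_x) \cap col(A_x) = \{\mathbf{0}\}$ (and its star-analog) should then chain together to force $v = 0$.

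The row-space argument is entirely symmetric, swapping columns for rows, $Z$ for $Y$, and using $row(A_x) \cap row(A_y) = \{\mathbf{0}\}$ in place of the column cross-disjointness. Either direction alone suffices to conclude $D = 0$, giving $J_x = J_x^*$ and $A_x = A_x^*$ at once.

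The main obstacle I anticipate is precisely this intersection-triviality step. The given hypotheses do not directly state that $col(J_x)$ is disjoint from $col(A_x) + col(A_z)$ (only from each summand separately), so the argument must genuinely exploit the full direct-sum decomposition $Z = J_z + A_z$ (and its star-analog) rather than merely pairwise disjointness. Handling this bookkeeping cleanly, while tracking the unprimed and primed decompositions of $X$, $Y$, and $Z$ in parallel, is where the real care is needed.
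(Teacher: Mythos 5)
Your framing of the goal is fine: setting $D = J_x - J_x^* = A_x^* - A_x$ and showing $D=0$ is exactly how the paper closes its argument. But the step you defer --- showing that $[col(J_z)+col(J_z^*)]\cap[col(A_x)+col(A_x^*)]=\{\mathbf{0}\}$ from the $Z$-relations alone --- is not a bookkeeping obstacle; it is the entire content of the theorem, and as you have set it up it cannot be carried out. The column-side hypotheses by themselves do not imply uniqueness. Concretely, take $X$ to be $2\times 2$ with $J_x=e_1a^T$, $A_x=e_2b^T$, $col(J_z)=\mathrm{span}(e_1)$, $col(A_z)=\mathrm{span}(e_1+e_2)$; then $J_x^*=e_1(a+\lambda b)^T$, $A_x^*=(e_2-\lambda e_1)b^T$ with the same $J_z,A_z$ satisfies every column-space hypothesis for any $\lambda\notin\{0,-1\}$, yet $J_x^*\neq J_x$. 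So your claim that ``either direction alone suffices'' is false at the level of hypotheses: the row conditions coming from $Y$ and the column conditions coming from $Z$ are both essential, and any correct proof must use them jointly. Relatedly, the containment $col(D)\subseteq col(J_x)+col(J_x^*)$ is too weak a starting point --- without first knowing $col(J_x)=col(J_x^*)$, that sum can have dimension up to $2r$ and there is no reason for it to meet $col(A_x)+col(A_x^*)$ trivially.

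The missing idea, which is how the paper proceeds, is to first pin down the subspaces themselves before comparing the matrices. The paper invokes the JIVE uniqueness result (Theorem 1.1 in the supplement of \citet{lock2013joint}) on the pair $(X,Y)$ to get $row(J_x)=row(J_x^*)$ and on the pair $(X,Z)$ to get $col(J_x)=col(J_x^*)$. It then shows $N(A_x)=N(A_x^*)$ and $N(A_x^T)=N(A_x^{*T})$ by a case analysis on whether a null vector of $A_x$ also kills $J_x$ --- an argument that needs \emph{both} of the preceding space equalities --- giving $row(A_x)=row(A_x^*)$ and $col(A_x)=col(A_x^*)$. Only after all four equalities does the difference argument close: $row(D)\subseteq row(J_x)$ and $row(D)=row(-D)\subseteq row(A_x)$, and $row(J_x)\cap row(A_x)=\{\mathbf{0}\}$ forces $D=0$. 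If you want to avoid citing the JIVE result, you would need to reprove that the joint row space is determined by $(X,Y)$ and the joint column space by $(X,Z)$; either way, establishing those equalities is the step your proposal is missing.
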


\begin{proof}
We first show that $J_x$ and $J^*_x$ have the same row and column spaces. By Theorem 1.1 in the supplement of \citet{lock2013joint} there exists a unique orthogonal decomposition of $X$ and $Y$:
 \begin{align*}
  &X = J^\perp_x + A^\perp_x  &J^\perp_x A^{\perp^T}_x=0_{m_1\times m_1}   \\
   &Y = J^\perp_y + A^\perp_y  &J^\perp_y A^{\perp^T}_y=0_{m_2\times m_2}	
 \end{align*}
such that $row(J_x) = row(J^\perp_x)$ and $row(J^*_x) = row(J^\perp_x)$.  Thus $row(J_x) = row(J^\perp_x)$, and by a symmetric argument $col(J_x) = col(J^*_x)$.      

We next show that $A_x$ and $A^*_x$  have the same row and column spaces, by showing that they have the same null spaces.  Define the nullspace of $A_x$, $N(A_x) = \{\mathbf{v} \in  \mathbb{R}^{n_1}: A_x v = \mathbf{0}\}$,  and take $\mathbf{v} \in N(A_x)$.  If $\mathbf{v} \in N(J_x)$, then 
\[X v = J_x v + A_x v = \mathbf{0}+\mathbf{0} = \mathbf{0},\]
and $\mathbf{v} \in N(J^*_x)$ because $row(J_x) = row(J^*_x)$.  So, $\mathbf{v} \in N(A^*_x)$ because
\[\mathbf{0} = J^*_x v + A^*_x v =A^*_x v.\]
If $\mathbf{v} \notin N(J_x)$, then 
\[X v = J_x v \in col(J_x).\]
So, because $col(J_x)=col(J^*_x)$,
\[J^*_x v +A^*_x v \in col(J^*_x).\]
Thus $A^*_x v=\mathbf{0}$, because $col(J_x)\cap col(A_x)=\{\mathbf{0}\}$, and we again conclude $v \in N(A^*_x)$.  It follows that $N(A_x) \subseteq N(A^*_x)$; symmetric arguments show $N(A^*_x) \subseteq N(A_x)$, $N(A^T_x) \subseteq N(A^{*^T}_x)$, and $N(A^{*^T}_x) \subseteq N(A^T_x)$.  Thus $N(A_x) = N(A^*_x)$ and $N(A^T_x) = N(A^{*^T}_x)$, so $row(A_x) = row(A^*_x)$ and $col(A_x) = col(A^*_x)$.  

Define $J_\text{diff} = J_x-J^*_x$ and $A_\text{diff} = A_x-A^*_x$, and consider that 
\[J_\text{diff} + A_\text{diff} = X-X = 0_{m_1 \times n_1}.\]
Note that $row(J_\text{diff}) \subseteq row(J_x)$ because $row(J_x)=row(J^*_x)$, and $row(A_\text{diff}) \subseteq row(A_x)$ because $row(A_x)=row(A^*_x)$, so $row(J_\text{diff})\cap row(A_\text{diff}) = \{\mathbf{0}\}$. For any $\mathbf{v} \in \mathbb{R}^{n_1}$, 
\[J_\text{diff} \, v + A_\text{diff} \, v = \mathbf{0},\] and therefore $J_\text{diff} \, v = \mathbf{0}$ and $A_\text{diff} \, v = \mathbf{0}$.  It follows that  $J_\text{diff} = A_\text{diff} = 0_{m_1 \times n_1}$, and thus $J_x=J^*_x$ and $A_x=A^*_x$.          
\end{proof}

By the inherent identifiability of the decomposition for $X$, it follows that the entire LMF-JIVE decomposition is identifiable under the orthogonal transformations of  structured variability in $Y$ and $Z$ in Equation~(\ref{trEq}).  The proof is given in Corollary~\ref{cor1}.

\begin{corollary}
\label{cor1}
Assume $\{J_x,A_x,J_y,A_y,J_z,A_z\}$ and $\{J_x^*,A_x^*,J_y^*,A_y^*,J_z^*,A_z^*\}$ satisfy the conditions of Theorem~\ref{thm1}.  Assume, in addition, that the joint and individual structures for $Y$ and $Z$ are orthogonal for both decompositions:
\[J_y A^T_y=0_{m_2\times m_2}, \; J_z^T A_z=0_{n_2\times n_2}, \; \; J^*_y A^{*^T}_y = 0_{m_2\times m_2},\; J^{*^T}_z A^*_z = 0_{n_2\times n_2}. \]
Then $\{J_x,A_x,J_y,A_y,J_z,A_z\} = \{J_x^*,A_x^*,J_y^*,A_y^*,J_z^*,A_z^*\}$.
\end{corollary}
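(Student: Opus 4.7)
The plan is to leverage Theorem~\ref{thm1} to get the $X$-structure for free, and then use the orthogonality hypotheses to pin down the $Y$- and $Z$-structure by projecting onto the (now known) shared row and column spaces of $J_x$.

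First I would apply Theorem~\ref{thm1} directly to obtain $J_x = J_x^*$ and $A_x = A_x^*$. In particular this gives $\mathrm{row}(J_x) = \mathrm{row}(J_x^*)$ and $\mathrm{col}(J_x) = \mathrm{col}(J_x^*)$, so letting $P_J^R$ denote the orthogonal projection onto $\mathrm{row}(J_x)$ and $P_J^C$ the orthogonal projection onto $\mathrm{col}(J_x)$, both projections are the same for the starred and unstarred decompositions.

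Next I would deal with $Y$. The hypothesis $\mathrm{row}(J_y) = \mathrm{row}(J_x)$ (for both decompositions) means the rows of $J_y$ and $J_y^*$ lie in $\mathrm{row}(J_x)$, so $J_y P_J^R = J_y$ and $J_y^* P_J^R = J_y^*$. The added orthogonality $J_y A_y^T = 0$ says every row of $A_y$ is orthogonal to every row of $J_y$, and since the rows of $J_y$ span all of $\mathrm{row}(J_x)$, the rows of $A_y$ lie in $\mathrm{row}(J_x)^\perp$; hence $A_y P_J^R = 0$, and likewise $A_y^* P_J^R = 0$. Right-multiplying the identity $J_y + A_y = Y = J_y^* + A_y^*$ by $P_J^R$ then gives $J_y = J_y^*$, and subtracting from $Y$ gives $A_y = A_y^*$. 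A symmetric argument with $P_J^C$ applied on the left, using $\mathrm{col}(J_z) = \mathrm{col}(J_x)$ and $J_z^T A_z = 0$, yields $J_z = J_z^*$ and $A_z = A_z^*$.

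There is no real obstacle here beyond bookkeeping: the only subtle point is recognizing that the orthogonality $J_y A_y^T = 0$ combined with $\mathrm{row}(J_y) = \mathrm{row}(J_x)$ is strong enough to force $\mathrm{row}(A_y) \subseteq \mathrm{row}(J_x)^\perp$ (and not merely orthogonality to some subspace of $\mathrm{row}(J_x)$), which is what makes the projection argument go through cleanly. Once that is noted, the rest is a one-line projection computation on each of $Y$ and $Z$.
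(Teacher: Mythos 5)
Your proposal is correct and follows essentially the same route as the paper: apply Theorem~\ref{thm1} to fix the $X$-decomposition, then right-multiply $J_y+A_y=J_y^*+A_y^*$ by the projection $P_J^R$ onto $\mathrm{row}(J_x)$ (and symmetrically left-multiply by $P_J^C$ for $Z$) to isolate the joint parts. The only difference is that you explicitly justify the two facts the paper uses implicitly, namely $J_yP_J^R=J_y$ and $A_yP_J^R=0$, which is a welcome clarification rather than a deviation.
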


\begin{proof}
By Theorem~\ref{thm1}, $J_x=J^*_x$ and $A_x=A^*_x$.  Define $P_J^R$ as the projection onto the row space of $X$.  Then, \begin{align*}
J_y+A_y&=J_y^* + A_y^* \\
\rightarrow (J_y+A_y)P_J^R &=(J_y^* + A_y^*) P_J^R \\
\rightarrow J_y P_J^R + 0 &= J_y^* P_J^R + 0 \\
\rightarrow J_y = J_y^*. 
\end{align*}
Thus, $A_y=A_y^*$, and an analogous argument shows that $J_z = J_z^*$ and $A_z = A_z^*$.  
\end{proof}

\subsubsection{Orthogonality}

In the classical JIVE algorithm for vertical integration, the joint and individual structures were restricted to be orthogonal for identifiability and a parsimonious decomposition. For LMF-JIVE we restrict the joint and individual structures for $Y$ and $Z$ to be orthogonal,  and thus but not for $X$, which shares a joint row space and column space.  As shown in Section~\ref{unique}, orthogonality between joint and individual structure in $X$, in either the row space or the column space,  is not needed for identifiability of the decomposition.    Thus, variation in $Y$ and $Z$ is parsimoniously decomposed into joint, individual and residual variability, e.g., $||Y||_F^2 = ||J_y||_F^2 + ||A_y||_F^2+||E_y||_F^2$; however, this is not the case for $X$ in general: $||X||_F^2 < ||J_x||_F^2 + ||A_x||_F^2+||E_x||_F^2$.  Interestingly, no such decomposition exists or post-hoc transformation of the converged result exists for $X$ that would yield an orthogonal and parsimonious decomposition.  We show this in Theorem \ref{thm:orth} with a proof by contradiction.

\begin{theorem}
\label{thm:orth}
 
Assume we have $ X = J_x + A_x $, $ Y = J_y + A_y $, and $ Z = J_z + A_z $ where
$$ row(J_x) = row(J_y) $$
$$ col(J_x) = col(J_z) $$
$$ rank(J_x) = rank(J_y) = rank(J_z) = r $$

If we also have $J^*_x$, $J^*_y$, $J^*_z$, $A^*_x$, $A^*_y$, and $A^*_z$ such that
$$ row(J^*_x) = row(J^*_y) $$
$$ col(J^*_x) = col(J^*_z) $$
$$ rank(J^*_x) = rank(J^*_y) = rank(J^*_z) = r $$
and all of the joint and individual matrices are orthogonal:
$$ row(J^*_x) \perp row(A^*_x) $$
$$ col(J^*_x) \perp col(A^*_x) $$
$$ row(J^*_y) \perp row(A^*_y) $$
$$ col(J^*_z) \perp col(A^*_z) $$

Then $J^*_x + A^*_x \ne J_x + A_x = X $.
\end{theorem}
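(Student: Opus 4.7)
The plan is to proceed by contradiction. Suppose the starred matrices do exist with $J_x^*+A_x^*=X$, $J_y^*+A_y^*=Y$, and $J_z^*+A_z^*=Z$, and set $R^*=\mbox{row}(J_x^*)=\mbox{row}(J_y^*)$ and $C^*=\mbox{col}(J_x^*)=\mbox{col}(J_z^*)$, with $P_{R^*}$ and $P_{C^*}$ the corresponding orthogonal projectors. The row-orthogonality of $(J_y^*,A_y^*)$ together with $\mbox{row}(J_y^*)=R^*$ identifies $J_y^*=YP_{R^*}$, and the rank-$r$ condition then forces $YP_{R^*}$ to have rank $r$. Symmetrically, $J_z^*=P_{C^*}Z$ has rank $r$. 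The simultaneous row- and column-orthogonality of $(J_x^*,A_x^*)$ yields $J_x^*=XP_{R^*}$ and $J_x^*=P_{C^*}X$, and in particular the coupling identity
\[ XP_{R^*} = P_{C^*}X. \]

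Next I would unpack this identity. Multiplying by $(I-P_{R^*})$ on the right gives $P_{C^*}X(I-P_{R^*})=0$, and multiplying by $(I-P_{C^*})$ on the left gives $(I-P_{C^*})XP_{R^*}=0$; together these say $XR^*\subseteq C^*$ and $X^TC^*\subseteq R^*$. Hence $R^*$ is invariant under $X^TX$ and $C^*$ under $XX^T$, so by the spectral theorem $R^*$ must be spanned by $r$ right singular vectors of $X$ and $C^*$ by the matching left singular vectors. The orthogonality assumptions have thus pinned $(R^*,C^*)$ down (modulo any degeneracy from repeated singular values) to a pair of subspaces determined solely by the SVD of $X$.

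The contradiction will then come from comparing this requirement to the structure of the first decomposition. The shared joint row space $\mbox{row}(J_x)=\mbox{row}(J_y)$ is determined jointly by the signal in $X$ and $Y$ and generically does not coincide with any right-singular subspace of $X$, because the individual variation $A_x$ contributes to $X$'s SVD and perturbs it away from the joint subspace. I would close the proof by constructing a small explicit counterexample---say joint rank $r=1$ with $3\times 3$ matrices $X$, $Y$, $Z$ admitting a non-orthogonal first decomposition in which $\mbox{row}(J_x)$ is not an eigenspace of $X^TX$---and verifying directly that none of the (finitely many) right-singular subspaces of $X$ simultaneously satisfies $\mbox{rank}(YP_{R^*})=r$ with $\mbox{row}(YP_{R^*})=R^*$ and the analogous condition on $Z$. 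The hard part is handling the degeneracies when $X$ has repeated singular values, which would introduce extra freedom in choosing $R^*$; this is avoided by taking the counterexample so that $X$ has distinct singular values, making the set of candidate $R^*$ finite and explicitly checkable.
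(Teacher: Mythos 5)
Your opening derivation is sound and matches the heart of the paper's argument: from $row(J_x^*)\perp row(A_x^*)$ you get $J_x^*=XP_{R^*}$, from $col(J_x^*)\perp col(A_x^*)$ you get $J_x^*=P_{C^*}X$, and equating the two gives the coupling identity $XP_{R^*}=P_{C^*}X$ (the paper's version of this, specialized to $R^*=row(J_x)$ and $C^*=col(J_x)$, is $A_xP_J^R=P_J^CA_x$). Your further observation that the identity forces $R^*$ to be an invariant subspace of $X^TX$, hence a span of right singular vectors of $X$ with $C^*$ the matching span of left singular vectors, is correct and is a sharpening the paper does not state.

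The gap is in how you propose to reach a contradiction. Nothing in your argument ties $R^*$ to $row(J_x)$, so showing that $row(J_x)$ is not a singular subspace of $X$ eliminates only one candidate for $R^*$, not all of them. Worse, your own SVD characterization shows that for generic data a valid starred decomposition \emph{does} exist: take $R^*$ to be the span of the top $r$ right singular vectors of $X$ and $C^*$ the matching left span; then $XP_{R^*}=P_{C^*}X$ holds automatically, and for generic $Y$ and $Z$ the matrices $YP_{R^*}$ and $P_{C^*}Z$ have rank $r$, so every stated hypothesis on the starred decomposition is satisfied. Consequently the generic $3\times 3$ example you describe (distinct singular values, $row(J_x)$ not an eigenspace of $X^TX$) would fail at the verification stage: the top singular subspace of $X$ would work. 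A counterexample along your lines is still constructible, but only by making $Y$ and $Z$ degenerate (e.g., $r=1$ with $Yv_i=0$ or $Z^Tu_i=0$ for every singular pair $(u_i,v_i)$ of $X$), which is a different obstruction from the one you cite. The paper closes this gap by invoking the identifiability result of \citet{lock2013joint} (under the intersection conditions of Theorem~\ref{thm1}, implicitly carried over here) to conclude that the only admissible joint estimate is $J_x^*=J_x+A_xP_J^R$ and, symmetrically, $J_x^*=J_x+P_J^CA_x$, where $P_J^R$ and $P_J^C$ project onto the row and column spaces of the \emph{given} $J_x$; this pins $R^*=row(J_x)$ and $C^*=col(J_x)$, and the coupling identity becomes $A_xP_J^R=P_J^CA_x$, a condition on the given decomposition that fails in general. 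If you add that identification step (or restrict attention to starred decompositions recovering the same joint row and column spaces), your argument goes through, and your SVD observation then gives a clean route to explicit failures.
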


\begin{proof}

We claim that there exist $ X = J^*_x + A^*_x $, $ Y = J^*_y + A^*_y $, and $ X = J^*_y + A^*_y $ that meet the following conditions:
$$ row(J^*_x) = row(J^*_y) $$
$$ col(J^*_x) = col(J^*_z) $$
$$ rank(J^*_x) = rank(J^*_y) = rank(J^*_z) = r $$
We also claim that the joint and individual estimates $J^*$ and $A^*$ are orthogonal:
$$ row(J^*_x) \perp row(A^*_x) $$
$$ col(J^*_x) \perp col(A^*_x) $$
$$ row(J^*_y) \perp row(A^*_y) $$
$$ col(J^*_z) \perp col(A^*_z) $$

Using a result from \citet{lock2013joint}, the only estimate that satisfies $ row(J^*_x) = row(J^*_y) $ and $ row(J^*_x) \perp row(A^*_x) $ under these conditions is $ J^*_x = J_x + A_x P_J^R $, where $P_J^R$ is the orthogonal projection onto the row space of $J_x$.  Similarly, the conditions $ col(J^*_x) = col(J^*_z) $ and $ col(J^*_x) \perp col(A^*_x) $ imply that the estimate is $ J^*_x = J_x + P_J^C A_x $.  However, this implies that $ A_x P_J^R = P_J^C A_x $, which is not necessarily true in general.  Therefore, such an estimate does not exist.  
\end{proof}

\subsection{Simulation Study} \label{JIVEsim}
\label{simji}

Here we present a simulation study to test the ability of the LMF-JIVE algorithm to recover the true structure of the data.  Our simulation design is analogous to that in Section \ref{simst1}, but allows for simulated data sets to have both joint and individual structure.  There were $300$ simulated data sets, with three different variability settings representing equal joint and individual variability, higher joint variability, and higher individual variability.  For the first $100$ simulations, all elements of the joint structure components, $U$, $S_x$, $V$, $U_y$, and $V_z$, and the individual structure components, $U_{ix}$, $V_{ix}$, $U_{iy}$, $V_{iy}$, $U_{iz}$, and $V_{iz}$, were simulated from a Normal$(0,1)$ distribution.  For the next $100$ simulations, the joint components were simulated from a Normal$(0,9)$ distribution, while the individual structure components were simulated from a Normal$(0,1)$ distribution.  In the last $100$ simulations, the joint components were simulated from a Normal$(0,1)$ distribution, and individual components were simulated from a Normal$(0,9)$ distribution.  Error matrices were simulated from a Normal$(0,1)$ distribution.   As with the previous simulation, the simulated data sets had $50$ rows and $50$ columns, and the rank of the joint structure was $2$.  The rank of the individual structure was also $2$ for all three matrices.  We chose rank 2 for all structures in the simulation for simplicity, but the algorithm is not limited to situations of equal rank and rank selection is explored in Section~\ref{sec5}.  

For these simulations, we applied three different versions of the LMF or LMF-JIVE algorithms.  First, we used the LMF algorithm with only joint structure (JO).  We also applied LMF-JIVE, initializing by estimating (1) joint structure first (JF) or individual structure first (IF). We evaluated these algorithms by the same metrics as in the joint structure only simulation: reconstruction error and residual error,  as follows:

$$ E_{rec} = \frac{\sum_{i=x,y,z} (\| J_i - J_i^{true} \| ^2_F + \| A_i - A_i^{true} \| ^2_F)}{\| J_x^{true} \| ^2_F + \| J_y^{true} \| ^2_F + \| J_z^{true} \| ^2_F + \| A_x^{true} \| ^2_F + \| A_y^{true} \| ^2_F + \| A_z^{true} \| ^2_F} $$
$$ E_{res} = \frac{\| J_x + A_x - X \| ^2_F + \| J_y + A_y - Y \| ^2_F + \| J_z + A_z - Z \| ^2_F}{\| X \| ^2_F + \| Y \| ^2_F + \| Z \| ^2_F} .$$

The results are given in Table \ref{tab:sim2}.  In general either of the LMF-JIVE settings performed better than the LMF-only settings, demonstrating the value of distinguishing joint and individual structure.  However, the relative performance of joint-first or individual-first estimation for LMF-JIVE depended on the context, with individual-first estimation performing better in scenarios with higher individual signal and joint-first estimation performing better in scenarios with higher joint signal.  This demonstrates that the algorithm does not always converge to a global least-squares solution.  Thus, the results can be explained because the initial iteration will capture more variability with whichever component is estimated first.  For this reason, we recommend using whichever order estimates the structure with the largest variance first or using both models and comparing them in terms of converged SSE.  Between the joint and individual models, the mean residual error is comparable but those the better performing approach in terms of signal recovery  tends to have a lower residual error.  This is what we should expect, because the algorithm is supposed to be minimizing over the residual error.  This is smaller than the residual error for the joint only model, which is also expected, since we simulated data with both joint and individual structure.  
%%need more detail here on how the joint-only was set, and how it was evaluated.  

%  In general, the reconstruction error was smaller when using the transformation in Equation \ref{trEq}.  As this table suggests, and as other long-run simulations have shown, the model can be very slow to converge or may even get stuck in local minima when starting with the component with less variance.   

 %Tables

 \begin{table}
  \centering
    \caption{Reconstruction error and residual error for 3 different variations of the LMF algorithm under equal joint and individual variance, higher joint variance, and higher individual variance.  Key: JO = joint only algorithm; JF = joint structure estimated first; IF = individual structure estimated first.}
  \label{tab:sim2}
 \resizebox{\textwidth}{!}{  
  \begin{tabular}{ | c | c c c | }
    \hline
    $\mathbf{E_{rec}}$ \textbf{Mean (St. Dev.)} & JO & JF & IF \\
    \hline
    Equal Variance & 0.7246 (0.1701) & 0.1846 (0.2003) & 0.7079 (0.7783) \\
    Higher Joint & 0.0098 (0.0019) & 0.0017 (0.0008) & 1.997 (0.0085) \\
    Higher Ind & 1.643 (0.0272) & 0.9870 (0.2284) & 0.0505 (0.0259) \\
    \hline
        $\mathbf{E_{res}}$ \textbf{Mean (St. Dev.)} & JO & JF & IF \\
    \hline
    Equal Variance & 0.4950 (0.0613) & 0.1646 (0.0205) & 0.1787 (0.0201) \\
    Higher Joint & 0.0126 (0.0021) & 0.0038 (0.0005) & 0.0041 (0.0007) \\
    Higher Ind & 0.3591 (0.0269) & 0.0096 (0.0018) & 0.0057 (0.0007) \\
    \hline
  \end{tabular}}

\end{table}

\section{Imputation} \label{imputation}
\label{sec4}

In this section, we use the LMF and LMF-JIVE frameworks to introduce an imputation method for linked data with various forms of missingness, including missing rows and columns.  Our approach extends similar methods that have been created for data imputation with a single matrix using an SVD.  We extend an EM algorithm to iteratively impute the missing values using the SVD and compute the SVD given the imputed values (see \citep{kurucz2007methods,fuentes2006using}.  Thus, our method proceeds by iteratively updating missing values with successive applications of LMF-JIVE.  Here we focus on missing data in $X$ because of our interest in the imputation of cytotoxicity data; however, it is straightforward to extend the approach to impute missing data in $Y$ or $Z$ as well.

\subsection{Algorithm}
\label{imputeAlg}
The imputation algorithm begins by initiating all of the missing values in the data.  For single missing values, the entries are set to the mean of the row and column means for that position.  If a full row is missing, each entry is set to the column mean for each column.  Similarly if a full column is missing, each entry is set to its respective row mean.  Finally, for the entries where both the row and the column are entirely missing, the entries are set to the full matrix mean.  For a matrix X:
%$$ \hat{X}_{ij} = \bar{X} \text{ if at least one entry is non-missing in both } X_{i.}\text{ and } X_{.j} $$
$$ \hat{X}_{ij} = \bar{X} \text{ if row i and column j are both missing }  $$
$$ \hat{X}_{ij} = \bar{X}_{i.} \text{ if row i is non-missing but column j is missing } $$
$$ \hat{X}_{ij} = \bar{X}_{.j} \text{ if row i is missing but column j is non-missing } $$
$$ \hat{X}_{ij} = (\bar{X}_{i.} + \bar{X}_{.j}) / 2 \text{ if row i and column j are both non-missing } $$

After initializing the matrix, we do the following:
\begin{enumerate}
\item Compute the LMF-JIVE decomposition using the imputed matrix $\hat{X}$.  Let $J_x$ be the joint component estimate for $X$ and $A_x$ be the individual component estimate for X.  
\item Impute the missing values using the decomposition in step 1.  For missing entries in $X$, set $ X_{ij} = \hat{X}_{ij} $.
\item Repeat steps 1 and 2 until the algorithm converges.  We used the squared Frobenius norm of the difference between the current and previous estimates of X as our convergence criterion, with a threshold of 0.0001.  For the t$^{th}$ iteration, we stop if $\| \hat{X}^{(t)} - \hat{X}^{(t-1)} \| ^2_F < 0.0001$.  
\end{enumerate}

This imputation strategy can be considered an EM algorithm, under a normal likelihood model.  To formalize this, let $\mu_x = USV + U_x S_x V_x$, $\mu_y = USV + U_y S_y V_y$, and $\mu_z = USV + U_z S_z V_z$ give the mean for each entry in a random matrix.  Assume that the residuals from this model are independent and normally distributed with means $\mu_x$, $\mu_y$, and $\mu_z$ for $X$, $Y$, and $Z$, respectively, and variance $\sigma ^2$.  The log likelihood for this model is given below.  Note that the values in $X$, $Y$, and $Z$ are conditionally independent given the parameter space \[\{U, S, V, U_x, S_x, V_x, U_y, S_y, V_y, U_z, S_z, V_z\},\] so the likelihood can be written as a product of independent normal likelihoods.  It is easy to see that this likelihood is maximized when the total sum of squared residuals is minimized, which is accomplished by the alternating least squares method implemented by the LMF algorithm. Thus, step (1.) in the algorithm above corresponds to an M-step.  Step (2.) corresponds to an E-step, where the expected values for $X$, $Y$, and $Z$ are given by their means $\mu_x$, $\mu_y$, and $\mu_z$.  
$$ log L(U, S, V, U_x, S_x, V_x, U_y, S_y, V_y, U_z, S_z, V_z; X, Y, Z) $$
$$ \propto \mbox{log} \prod_{i, j} [ e^{-\frac{1}{2\sigma^2}(X_{ij} - [\mu_x]_{ij})^2} ] \prod_{i, j} [ e^{-\frac{1}{2\sigma^2}(Y_{ij} - [\mu_y]_{ij})^2} ] \prod_{i, j} [ e^{-\frac{1}{2\sigma^2}(Z_{ij} - [\mu_z]_{ij})^2} ] $$
$$ = -\frac{1}{2\sigma^2} [ \sum_{i, j} (X_{ij} - [\mu_x]_{ij})^2 + \sum_{i, j} (Y_{ij} - [\mu_y]_{ij})^2 + \sum_{i, j} (Z_{ij} - [\mu_z]_{ij})^2  ] $$
In practice the residual variance $\sigma^2$ may not be the same across data matrices.  However, when $X$, $Y$ and $Z$ are normalized to have the same Frobenius norm, the M-step (1.) can be considered to optimize a weighted log-likelihood in which the likelihood for each data matrix are scaled to contribute equally.   

\subsection{Simulation}

We generated $100$ data sets using a simulation scheme analogous to that in Section \ref{simji} under each of 6 settings (2 different matrix dimensions for $Y$ and $Z$ and 3 different noise variances).  In each setting, the $X$ matrix dimensions were 50 $\times$ 50.  The $Y$ matrix was $m_2 \times 50$, where $m_2$ was 30 or 200.  Similarly, the $Z$ matrix was $50 \times n_2$, where $n_2$ was 30 or 200.  In all simulations, $m_2$ and $n_2$ were equal.  We varied the non-shared dimensions of $Y$ and $Z$ to see the effect of having more data in $Y$ and $Z$ on the imputation accuracy for $X$.  We also varied the variance of the noise matrices among 0.1, 1, and 10.  The ranks ($r$, $r_x$, $r_y$, and $r_z$) were each chosen from a Uniform($\{0,1,2,3,4,5\})$ distribution.  For each simulation we randomly set 3 rows, 3 columns, and up to 50 additional entries to missing.  We compared 3 different methods for imputation: LMF-JIVE, LMF with only joint structure (LMF), and SVD.  We then evaluated the methods using two different error calculations: the sum of squared errors for the imputed $X$ matrix compared to the simulated $X$ matrix and the sum of squared errors for the imputed $X$ matrix compared to the true underlying structure of the $X$ matrix ($ X_{true} = US_xV^T + U_{ix}S_{ix}V_{ix}^T $).  These values were computed separately for the two types of missing entries (those missing an entire row/column and those missing single entries). 
\begin{align} \text{Error}(X) = \| (X_{est} - X)[\text{missing values}] \| ^2_F / \| X[\text{missing values}] \| ^2_F \label{missingError} \end{align}
$$ \text{Error}(X_{true}) = \| (X_{est} - X_{true})[\text{missing values}] \| ^2_F / \| X_{true}[\text{missing values}] \| ^2_F $$

A summary of the results are given in Table \ref{tab:imp}.  Generally, the LMF-JIVE imputations performs better than the SVD imputation, indicating that incorporating information from $Y$ and $Z$ can improve accuracy. This is especially helpful for the full row/column missing values, where the SVD imputation can not do better than mean imputation because there is no information present for estimating those entries with an SVD (so the SVD error is always greater than 1). An exception  is when there is high noise and the dimensions of $Y$ and $Z$ are small, in which case LMF and LMF-JIVE may be overfitting.    LMF with joint structure universally performed worse than LMF-JIVE for this study, demonstrating the benefit of decomposing joint and individual structure.    

 \begin{table}
  \centering
    \caption{Results of the imputation simulations.  $ \text{Error}(X) $ shows how well each method imputed the simulated data set, while $ \text{Error}(X_{true}) $ shows how well each method imputed the true underlying structure of the simulated data set when excluding random noise.  Each value is the mean from 100 simulations.  The oracle is calculated as $ \| E_x[\text{missing values}] \| ^2_F / \| X[\text{missing values}] \| ^2_F $ and represents the best the imputation can do because of the random noise.}
  \label{tab:imp}
 \resizebox{\textwidth}{!}{  
  \begin{tabular}{ | c c | c c c | c c c | c | }
    \hline
    \multicolumn{2}{|c|}{Individual Missing} & \multicolumn{3}{|c|}{$ \text{Error}(X) $} & \multicolumn{3}{|c|}{$ \text{Error}(X_{true}) $} & \\	
    $m_2, n_2$ & Var(Noise) & SVD & LMF & LMF-JIVE & SVD & LMF & LMF-JIVE & Oracle \\
    \hline
    30 & 0.1 & 0.025 & 0.031 & 0.024 & 0.007 & 0.013 & 0.005 & 0.018 \\
    200 & 0.1 & 0.028 & 0.028 & 0.032 & 0.007 & 0.006 & 0.010 & 0.022 \\
    30 & 1 & 0.232 & 0.274 & 0.218 & 0.068 & 0.120 & 0.051 & 0.174 \\
    200 & 1 & 0.222 & 0.215 & 0.198 & 0.067 & 0.058 & 0.039 & 0.165 \\
    30 & 10 & 1.061 & 1.185 & 0.931 & 1.144 & 1.495 & 0.803 & 0.640 \\
    200 & 10 & 1.139 & 0.927 & 0.85 & 1.288 & 0.815 & 0.580 & 0.631 \\
    \hline
    \multicolumn{2}{|c|}{Row/Column Missing} & \multicolumn{3}{|c|}{$ \text{Error}(X) $} & \multicolumn{3}{|c|}{$ \text{Error}(X_{true}) $} & \\	
    $m_2, n_2$ & Var(Noise) & SVD & LMF & LMF-JIVE & SVD & LMF & LMF-JIVE & Oracle \\
    \hline
    30 & 0.1 & 1.022 & 0.968 & 0.572 & 1.022 & 0.965 & 0.563 & 0.019 \\
    200 & 0.1 & 1.023 & 1.698 & 0.594 & 1.024 & 1.731 & 0.584 & 0.023 \\
    30 & 1 & 1.022 & 3.967 & 0.667 & 1.027 & 4.788 & 0.601 & 0.175 \\
    200 & 1 & 1.018 & 7.971 & 0.618 & 1.021 & 9.639 & 0.536 & 0.175 \\
    30 & 10 & 1.016 & 7.652 & 1.124 & 1.050 & 22.804 & 1.328 & 0.650 \\
    200 & 10 & 1.014 & 5.399 & 0.877 & 1.038 & 14.088 & 0.663 & 0.632 \\
    \hline
  \end{tabular}}
\end{table}

\section{Rank Selection Algorithms}
\label{sec5}

Selection of the joint and individual ranks for LMF-JIVE must be considered carefully, to avoid misallocating joint and individual structure.  Several rank selection approaches have been proposed for JIVE and related methods for vertical integration, including permutation testing \citep{lock2013joint}, Bayesian information criterion (BIC) \citep{o2016r,jere2014}, and likelihood cross-validation \citep{li2017incorporating}. Here we propose and implement two approaches to rank selection in the LMF-JIVE context: (i) a permutation testing approach extending that used for JIVE (Section~\ref{permtest}), and (ii) a novel approach based on cross-validated imputation accuracy of missing values (Section~\ref{CVrankselect}).  Our results suggest that the two approaches give similar overall performance regarding rank recovery; approach (i) is well-motivated if a rigorous and conservative statistical approach to the identification of joint structure is desired, whereas  approach (ii) is well-motivated if missing value imputation is the primary task.   

\subsection{Permutation Test}
\label{permtest}

\subsubsection{Algorithm}

We generate a null distribution for joint structure by randomly permuting the rows of $Z$ and the columns of $Y$ to break any associations between the matrices while maintaining the structure within each matrix.  We generate a null distribution for individual structure by an appropriate permutation of the entries within a matrix, under the motivation that individual low-rank approximations should give correlated structure that is not explained by the joint approximation.  We iterate between selecting the joint rank from $\{X^{J}, Y^{J}, Z^{J}\}$, updating the LMF-JIVE decomposition, and selecting the individual ranks from $\{X^{I},  Y^{I}, Z^{I}\}$, until the selected ranks remain unchanged.  

The joint rank is estimated as follows: 
\begin{enumerate}
\item Let $r_{max}$ be the maximum possible (or plausible) joint rank ($r$) for the data.
\item Initialize $ \tilde{X} = X^{J} $, $ \tilde{Y} = Y^{J} $, and $ \tilde{Z} = Z^{J} $.
\item Compute the sum of squared residuals  for a rank $1$ LMF approximation of the data:  \[ \text{SSR} = \| J_x - \tilde{X} \| ^2_F+\| J_y - \tilde{Y} \| ^2_F +\| J_z - \tilde{Z} \| ^2_F\]
\item Subtract the joint structure from step 3 from $X^{J}$, $Y^{J}$, and $Z^{J}$. Set $ \tilde{X} = \tilde{X} - J_x $, $ \tilde{Y} = \tilde{Y} - J_y $, and $ \tilde{Z} = \tilde{Z} - J_z $.
\item Repeat steps 3 and 4 $(r_{max} - 1)$ times. 
\item Permute the columns of $Y$ and the rows of $Z$ by sampling without replacement, yielding $ Y_{perm} $ and $ Z_{perm} $.
\item Set $ \tilde{Y} = Y_{perm} $, and $ \tilde{Z} = Z_{perm} $.
\item Repeat steps 3 and 4 $r_{max}$ times using the simulated data set. 
\item Repeat steps 6 through 8 for 99 more permutations.
\item Select $r$ as the highest rank such that the SSR of the true data is higher than the 95$^{th}$ percentile of the SSRs for the permuted data.  
\end{enumerate}

To estimate the individual ranks of the data sets, we permute all entries of $X^I$, the entries within each row in $Y^I$, and the entries within each column in $Z^I$.  We compute an SVD of rank  $r_{i, max}$ for each of the true data matrices $i={X,Y,Z}$, and also for each of the permuted matrices.   For each matrix, $r$ is then chosen to be the highest rank such that the $r^{th}$ singular value of the true data is higher than the 95$^{th}$ percentile of the $r^{th}$ singular values for the permuted data sets.  

\subsubsection{Simulation} \label{permsim}

We used a simulation to test the permutation rank selection algorithm.  We simulated 100 data sets with randomly chosen joint and individual ranks from independent Uniform$\left(\{0,1,2,3,4,5\}\right)$ distributions.  As in Section \ref{simji}, all elements of the joint structure components, $U$, $S_x$, $V$, $U_y$, and $V_z$, and the individual structure components, $U_{ix}$, $V_{ix}$, $U_{iy}$, $V_{iy}$, $U_{iz}$, and $V_{iz}$, were simulated from a Normal$(0,1)$ distribution. 

% Table of rank selection results.

The simulation showed that the permutation test tends to underestimate joint rank, and it tends to overestimate the individual rank.  It underestimated joint rank in 76\% of the simulations.  It overestimated the individual ranks for $X$, $Y$, and $Z$ by 51\%, 59\%, and 68\%, respectively.  The individual rank of $X$ was estimated better than the other ranks, with 43\% of simulations getting the correct rank and a mean absolute deviation of 0.85.

\subsection{Cross-validation}
\label{CVrankselect}

\subsubsection{Algorithm}

We consider the following forward selection algorithm based on cross-validation missing value imputation to choose the ranks:
\begin{enumerate}
\item Randomly set a portion (full rows, full columns, and some individual entries) of the matrix entries to missing.  
\item Use the imputation algorithm described in Section \ref{imputation} to estimate the full matrix, $X_{est}$.  
\item Compute the sum of squared error ($ SSE_{0} = \| (X_{est} - X) \| ^2_F $) for the null imputation with ranks ${r, r_x, r_y, r_z} = {0, 0, 0, 0}$.
\item Add 1 to each rank and compute the SSE for each resulting imputed matrix.
\item If no models were better than $ SSE_{0} $, then choose that model's ranks and stop.  
\item Otherwise, choose the model with the lowest SSE and set $ SSE_{0} $ to that model's SSE.  
\item Repeat steps 4 through 6 until adding 1 to any rank does not decrease the SSE of the imputation.  
\end{enumerate}

While we chose forward selection here, stepwise selection and an exhaustive search of all possible combinations are also possible.  We tested the stepwise selection and got almost equivalent results, indicating that the algorithm was seldom taking backwards steps.  A more exhaustive search would have probably yielded more accurate results, as the SSE was higher for our selected ranks than when using the true ranks in many of the cases; however, this would have required running the algorithm for $ 6^4 $ rank combinations, which is computationally infeasible.

\subsubsection{Simulation}
We tested this cross-validation approach using $100$  simulated data sets generated as in Section \ref{permsim}.  In contrast to the permutation test, the cross-validation method tended to overestimate the joint rank and underestimate the individual ranks.  It correctly estimated joint rank in 30\% of simulations and overestimated it in 58\%.  The mean absolute deviation for the estimated rank from the true joint rank was 1.17.  The estimates for the individual structure of X were closer, with 58\% capturing the true rank and an mean absolute deviation of 0.74.  The $Y$ and $Z$ individual ranks were underestimated 61\% and 62\% of the time, respectively.  They had a mean absolute deviation of 1.26 (for $Y$) and 1.21 (for $Z$).  

% Table?

\section{Application to Toxicology Data}
\label{sec6}

We applied the LMF-JIVE algorithm to the toxicity ($X$), chemical attribute ($Y$), and genotype data ($Z$).  Previous investigations of these data \citep{abdo2015population,eduati2015prediction} and similar data \citep{lock2012quantitative}  have found clear evidence of systematic cell-line variability in toxicity for several chemicals, but the genetic drivers of this variability have not been well characterized. Various analyses presented in \citet{eduati2015prediction} show that molecular chemical attributes are significantly predictive of overall (mean) chemical toxicity.     We are interested in assessing the combined predictive power of chemical attributes and genetics on toxicity using a fully multivariate and unified approach, and we are also interested more generally in exploring patterns of systematic variability within and across these three linked data matrices.    
% The joint factorized form for the  Figure~\ref{fig:ToxFac}.
%
%\begin{figure}[h]
%   \centering
%   \includegraphics[scale=0.6]{Diag8.pdf}
%   \caption{The factorized form of the LMF decomposition in the context of the toxicity screening data.}
%   \label{fig:ToxFac}
%\end{figure}

For $751$ cell lines, the cytotoxic effect of each of $105$ different chemicals was quantified by log(EC$_{10}$), where EC$_{10}$ is the dose concentration that results in a 10\% decrease in cell viability; these values make up the toxicity matrix $X: 751 \times 105$.  Approximately $1.3$ million SNPs were available for each of the $751$ cell lines.  SNP values were of the form $z=\{0,1,2\}$, where $z$ gives the number of minor alleles.  We first removed all SNPs with missing values across any of the $751$ cell lines, and removed those SNPS with a minor allele frequency less than or equal to $5/751 \approx 0.007$.    To identify a set of SNPs with potential relevance to cell toxicity, we performed a simple additive linear regression to predict toxicity from SNP for each (SNP, chemical) pair.  Those SNPs with an association p-value less than $10^{-8}$ for any chemical were included, resulting in $441$ SNPs $Z: 751 \times 441$. For each of the $105$ chemicals, data were available for $9272$ quantitative structural attributes defined using the Simplex representation for molecular structure (SIRMS) \citep{kuz2008hierarchical}.  Attributes with value $0$ for at least $100$ chemicals were removed, leaving $2092$ attributes.  These values were log-transformed ($y=log(y+1)$ and centered so that each attribute had mean $0$.  To identify a set of attributes with potential relevance to cell toxicity, we performed a simple additive linear regression to predict toxicity for each (attribute, cell line) pair.  Those attributes with an association p-value less than $10^{-3}$ for any cell line were included, resulting in 105  attributes $Y: 105 \times 105$.  All three matrices ($X$, $Y$, and $Z$) were centered to have overall mean $0$, and scaled to have the same Frobenius norm.         

We conducted a robust 20-fold cross-validation study to assess the accuracy of recovering underlying structure via missing data imputation.  For each fold 5$\%$ of the columns of $X$ (chemicals) were withheld as missing, 5$\%$ of the rows of $X$ were withheld as missing, and 5$\%$ of the  entries of $X$ from the remaining rows and columns are randomly selected to be withheld as missing.  The folds were non-overlapping, so that each entry of $X$ has its column missing exactly once, its row missing exactly once, and its value missing (with most other values in its row and column present) exactly once, over the $20$ folds.  For each fold we impute all missing values as described in Section~\ref{imputeAlg}, using either a joint-only LMF for \{$X$, $Y$, $Z$\}, an SVD approach for $X$ only, or a joint and individual LMF-JIVE approach.          

We select the model ranks via the forward-selection approach of Section \ref{CVrankselect}, using the mean squared error for imputed values (averaged over row-missing, column-missing, and entrywise-missing imputations) as the selection criteria.  This results in a joint factorization with rank $r=4$, an SVD factorization with rank $r_X=5$, and a joint and individual factorization with ranks $r=3, r_X=4, r_Y=2$, and $r_Z=6$.  
	The relative imputation accuracy (Equation (\ref{missingError}) for each method is shown in Table~\ref{tab:toxres}, broken down by accuracy in imputing entry-wise missing values, column-wise missing, row-wise missing, and values that are missing their entire row and column.  The joint and individual factorization approach performed comparatively well for all types of missing data, demonstrating that it is a flexible compromise between the joint only and individual only (SVD) approaches.  SVD imputation performed better than joint imputation for entry-wise missing data, and similarly to joint and individual imputation.  This suggests that the attributes ($Y$) and genetics ($Z$) provide little additional information on the toxicity for a given [cell line, chemical] pair, given other values in $X$.  However, where no toxicity data are available for a given cell line and chemical, imputation accuracy is substantially improved with the joint approaches that use $Y$ and $Z$ (here SVD imputation using $X$ only can perform no better than a relative error of $1$, because there is no data for the given row and column).  
	
 \begin{table}
  \centering
    \caption{Relative error for missing data imputation under different factorization approaches.}
  \label{tab:toxres}
 \resizebox{\textwidth}{!}{  
  \begin{tabular}{ | c | c c c | }
    \hline
    & LMF & SVD & LMF-JIVE \\
    \hline
    Missing chemical and cell line  & $0.878$ & $1.02$ & $0.854$ \\
   Missing chemical & $0.898$ & $1.02$ & $0.875$ \\
   Missing cell line & $0.203$ & $0.208$ & $0.201$ \\
   Missing entry & $0.164$ & $0.112$ & $0.114$\\
    \hline
  \end{tabular}}
\end{table}

Figure~\ref{fig:heatmaps} shows heatmaps of data for each of the linked matrices, as well as heatmaps from their low-rank approximations resulting from the joint and individual factorizations with the selected ranks.  The toxicity data has several chemicals that are universally more toxic across the cell line panel and several chemicals that are universally less toxic; but there are also some chemicals that demonstrate clear and structured heterogeneity across the cell lines.  The chemical attribute heatmap shows distinctive patterns corresponding to more toxic and less toxic chemicals.  Patterns in the SNP data that are associated with toxicity are less visually apparent.  However, a plot of the cell line scores for the first two joint components in Figure~\ref{fig:rcomponents} reveals a prominent racial effect;  cell lines that are derived from individual from native African populations are distinguished from cell lines derived from non-African populations.  The effect of race on toxicity is not strongly detectable when considering each chemical independently;  under independent t-tests for African vs. non-African populations, the smallest FDR-adjusted \citep{benjamini1995} p-value is $0.05$.  However, a permutation-based test using Distance Weighted Discrimination \citep{wei2016direction} for an overall difference between African and non-African populations across the $105$ chemicals is highly significant (p-value$<0.001$), reinforcing the finding that toxicity profiles can differ by race.    

\begin{figure}[!h]
   \centering
   \includegraphics[width=1\textwidth,trim={2cm 6cm 2cm 4cm},clip]{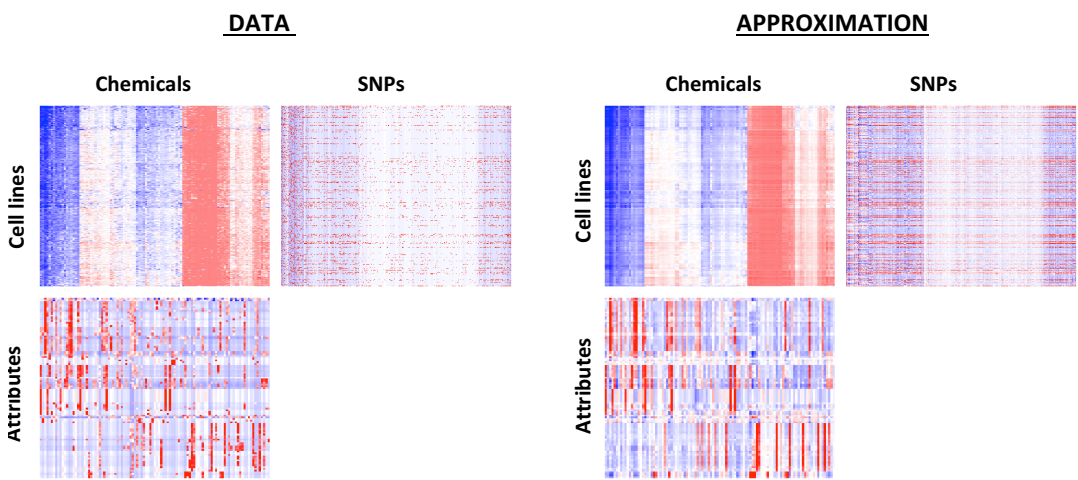}
   \caption{Heatmaps of the toxicity, attribute, and genotype data matrices (left) and their low-rank approximations (right). }
   \label{fig:heatmaps}
\end{figure}

\begin{figure}[!h]
   \centering
   \includegraphics[width=.6\textwidth,trim={0cm 0cm 2cm 2cm},clip]{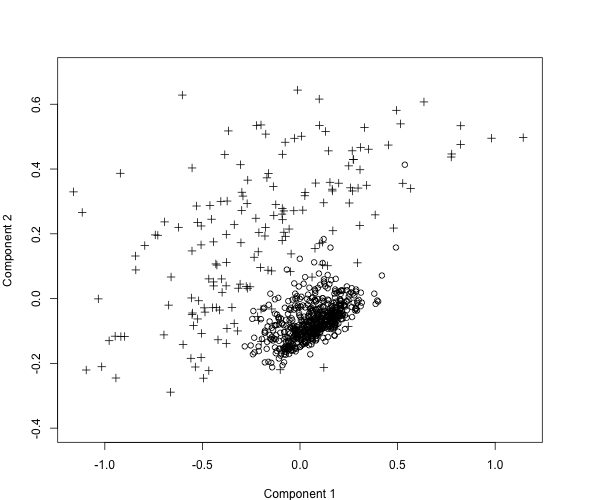}
   \caption{First two cell-line components of joint structure; '+' denotes a cell line derived from a native African population, 'O' denotes a cell line derived from a non-African population.}
   \label{fig:rcomponents}
\end{figure}

\section{Discussion}
\label{sec7}

With a dramatically increasing number of modalities for collecting high-content and multi-faceted data efficiently, large multi-source and interlinked data sets are becoming increasingly common.  Methods that can appropriately address these data without simplifying their structure are needed.  Although our development for this article was motivated by a cytotoxicity study, the methodology is relevant to potential applications in a wide range of other fields. Most notably, new methods for simultaneous horizontal and vertical integration are needed for integrating molecular ``omics" data across multiple disparate sample sets, e.g., the integration of multi-omics molecular data across multiple disparate types of cancer (pan-omics pan-cancer).  LMF improves on previous methods for horizontal or vertical integration only, to allow for bi-dimensional integration.    The LMF and LMF-JIVE decompositions can be used for exploratory analyses, missing data imputation, dimension reduction, and for creating models using the joint or individual components.  

Although we have proposed two rank selection methods for LMF in this paper, both of them tend to be overly conservative.  Other approaches may be more accurate; for example, some form of efficient search may perform better than forward selection for the cross-validation rank selection approach.  It may also be possible to create a model-based approach, such as the BIC approach for JIVE described in \citet{jere2014} and \citet{o2016r}.

As shown in Section \ref{simji}, the LMF-JIVE algorithm does not always reach a global optimum.  Therefore, it may be beneficial to try different starting values for the $U$ or $V$ matrix estimates.  Another option is to try both estimating joint structure first and estimating individual structure first, as in Section~\ref{simji}.  

The LMF algorithm is estimated via a squared residual loss function, which is best motivated under the assumption that the data are normally distributed.  While this is a reasonable assumption for many applications, it may not be appropriate for some contexts.  For example, for categorical or count data, a model-based approach with alternative likelihoods for each dataset may be more appropriate.  This idea is explored for the context of vertical integration in \citet{li2017general}. Adjusting the LMF algorithm to accommodate other likelihoods is an area for future research.

Another exiting direction of future work for the integration of linked data are extensions to higher-order tensors (i.e., multi-way arrays).  Current multi-source decomposition methods, including LMF, are limited to two-dimensional tensors (matrices).  However, multi-source data sets may involve data with more than two dimensions.  For example, for the cytotoxicity data analyzed here, the toxicity matrix is a summary of toxicity data for multiple concentrations of each chemical.  However, we could avoid this summary and work with the individual concentrations, which would give us a 3-dimensional tensor for toxicity (cell lines $\times$ chemicals $\times$ concentrations).  Then we would have a multi-way, multi-source problem that we could solve using LMF if we had a method for handling joint and individual tensor decompositions.  Additionally, a multi-source data decomposition for higher-order tensors could have potential applications in MRI studies and personalized medicine.  LMF provides an important first step toward more general higher-order data integration, because it allows multiple dimensions to be shared, which is  is likely to be encountered for multi-source tensor data.  

\section{Availability}
\label{sec8}

R code to perform the LMF and LMF-JIVE algorithms, including code for rank selection and missing value imputation, are available online at \url{https://github.com/lockEF/LMF}.  

\newpage

\bibliographystyle{apa}
\bibliography{library2}
\end{document}